\documentclass{article}

\usepackage{amssymb}
\usepackage{amsmath}
\usepackage{amsfonts}
\usepackage{amsthm}
\usepackage{a4wide}
\usepackage{hyperref}
\usepackage{tikz}
\usetikzlibrary{trees,arrows,positioning}

\newtheorem{theorem}{Theorem}
\newtheorem{corollary}{Corollary}
\newtheorem{definition}{Definition}
\newtheorem{algorithm}{Algorithm}

\author{Andreas Rosowski\\ University of Siegen, Germany\\ andreas.rosowski@web.de}

\begin{document}

\title{On Fast Computation of a Circulant Matrix-Vector Product}

\maketitle
\begin{abstract}
This paper deals with circulant matrices. It is shown that a circulant matrix can be multiplied by a vector in time $O(n \log(n))$ in a ring with roots of unity without making use of an FFT algorithm. With our algorithm we achieve a speedup of a factor of about $2.25$ for the multiplication of two polynomials with integer coefficients compared to multiplication by an FFT algorithm. Moreover this paper discusses multiplication of large integers as further application.
\end{abstract}

\section{Introduction}
\subsection{Related Work}
In this paper we study circulant matrices and some of their applications. Circulant matrices are well studied and for instance it was shown that the product of a circulant matrix and a vector of size $n$ can be computed using $O(n \log(n))$ operations \cite{golub_loan} using FFT. Golub and van Loan also showed that a Toeplitz matrix and a vector can be multiplied in time $O(n \log(n))$ by expressing a Toeplitz matrix as circulant matrix. Circulant matrices and related structured matrices were also studied by several authors \cite{attenduross, blockcirculant, realcirculant, pan, yelim}.\\

Beginning with the famous algorithm by Karatsuba \cite{karatsuba} which yields a complexity of $O(n^{\log_2(3)})$ for the multiplication of two $n$ bit numbers the complexity was drastically reduced. Compared to the naiv schoolbook multiplication which needs $O(n^2)$ operations the method of Karatsuba was a hugh improvement. A generalization of Karatsuba's method was done by Toom \cite{toom} which yields an algorithm with a computational complexity of $O(n^{1+\varepsilon})$ for $\varepsilon > 0$. This algorithm works as follows: We choose two numbers $k_1, k_2 \geq 2$. Then we consider the input as two polynomials of length $k_1$ respectively $k_2$. Then the product of those polynomials will have length $k_1 + k_2 - 1$. It is a well known fact that evaluation and interpolation of a polynomial with degree $n$ works with $n+1$ pairwise different points. Thus, it suffice to evaluate the polynomials at $k_1 + k_2 - 1$ pairwise different points. The component-wise multiplication of the points is then realized as recursive calls. Interpolation yields the final coefficients. For $k_1 = k_2$ the running time is $O(n^{\log_{k_1}(2k_1 - 1)})$. We do not want to go into further details but let us also mention that the fastest algorithm used in praxis is the algorithm by Sch\"onhage and Strassen \cite{schoenhage_strassen} which yields a running time of $O(n \log(n) \log \log (n))$.\\

The main goal of this paper is to avoid the usage of an FFT algorithm for computing the product of a circulant matrix and a vector as well as for the multiplication of two polynomials. As further application this paper discusses a modification of the Sch\"onhage and Strassen algorithm \cite{schoenhage_strassen}. To archieve this we present an algorithm that computes the product of a circulant matrix and a vector in time $O(n \log(n))$ without making use of FFT. So called $f$-circulant matrices play an important role in our algorithm.

\subsection{Preliminaries}
In this paper we will consider so called circulant matrices. We give a formal definition:
\begin{definition}
A circulant matrix is an $n \times n$ matrix. For every entry $a_{ij}$ with $1 \leq i,j \leq n$ the following property holds: $a_{ij} = a_{i'j'}$, in which we have $i' = (i \mod n)+1$ and $j'= (j \mod n)+1$.
\end{definition}

Obviously a circulant matrix can be represented by a vector of size $n$. Thus, operations on circulant matrices such as additions can be computed in linear time. We are interested in computing the product of a circulant matrix and a vector. The usual way of computing such a product is by using the FFT since \cite{golub_loan}
\begin{displaymath}
A = F_n^{-1}diag(F_na)F_n
\end{displaymath}
in which $A$ denotes a circulant matrix, $a$ denotes the vector representing $A$ and $F_n$ denotes the DFT-matrix. Thus, the product of a circulant matrix and a vector can be computed using three FFTs in time $O(n \log(n))$. See \cite{FFT} for an FFT algorithm. The main goal of this paper is to avoid the use of three FFTs. Instead we give an algorithm computing the product of a circulant matrix and a vector directly. For this we need a special form of circulant matrices. The matrices we use are so called $f$-circulant matrices for a number $f$ and were defined in \cite{pan}.
\begin{definition}\label{Dzyklisch}
An $f$-circulant matrix is an $n \times n$ matrix. Furthermore for every entry $a_{ij}$ with $1 \leq i,j, \leq n$ the following two properties holds:
\begin{equation}\label{(1)}
a_{ij}= a_{i'j'} \text{~~~~for } i' = i+1 \leq n \text{ and } j' = j+1 \leq n
\end{equation}
\begin{equation}\label{(2)}
a_{ij}f= a_{i'j'} \text{~~~~for } j = n \text{ and } i' = i+1 \leq n \text{ and } j' = 1
\end{equation}
\end{definition}

Note that a circulant $1$-matrix is a usual circulant matrix. We give an example for an $f$-circulant matrix. Consider this $3 \times 3$ matrix:
\begin{align*}
\begin{bmatrix}
a_1 & a_2 & a_3\\
a_3f & a_1 & a_2\\
a_2f & a_3f & a_1
\end{bmatrix}
\end{align*}

Throughout the rest of the paper we identify circulant matrices and $f$-circulant matrices by row vectors and not by column vectors as it is usually be done.

\section{Fast Multiplication of Circulant Matrices}
\begin{theorem}
Let $R$ be a ring. Let $f \in R$ and let $n$ be a power of two. Then we have: If there is an $n$-th root of $f$ in $R$, an $n$-th root of unity in $R$ and there are $2^{-1}$ and $f^{-1}$ then the product of an $f$-circulant $n \times n$ matrix and a vector can be computed with $O(n \log(n))$ operations.
\end{theorem}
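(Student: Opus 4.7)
The plan is to reduce the $f$-circulant matrix-vector product to a polynomial multiplication in $R[x]/(x^n - f)$, and then to solve that multiplication by a halving recursion modelled on the FFT butterfly but using the factorization of $x^n - f$ in place of $x^n - 1$. First I would check, by a direct comparison with Definition~\ref{Dzyklisch}, that if $A$ is an $f$-circulant matrix with first row $(a_1,\dots,a_n)$, then the coefficient vector of $a(x)\,v(x) \bmod (x^n - f)$ agrees (up to a fixed re-indexing) with $Av$, so the task becomes computing a product of two polynomials of degree less than $n$ modulo $x^n - f$ in $O(n \log n)$ operations.

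For the divide step, let $g$ be a square root of $f$ in $R$, for instance the $(n/2)$-th power of the given $n$-th root of $f$. Then
\[
x^n - f = (x^{n/2} - g)(x^{n/2} + g),
\]
and the two factors are coprime because their difference $-2g$ is a unit. A polynomial $p = p_0 + x^{n/2} p_1$ with $\deg p_0, \deg p_1 < n/2$ reduces to $p_0 \pm g\,p_1$ modulo the two factors in $O(n)$ operations. Two recursive calls then compute the products in $R[x]/(x^{n/2}-g)$ and $R[x]/(x^{n/2}+g)$, which are themselves a $g$-circulant and a $(-g)$-circulant matrix-vector product of size $n/2$. The two residues are recombined via the standard CRT formula, which uses $2^{-1}$ and $g^{-1}$, again in $O(n)$ operations. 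The resulting recurrence $T(n) = 2T(n/2) + O(n)$ yields $T(n) = O(n \log n)$.

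The subtle point, and the main obstacle, is that the hypotheses must survive the recursion. At depth $d$ each subproblem has as its twist a $2^d$-th root of $f$, and the next halving requires a square root of that twist, i.e.\ a $2^{d+1}$-th root of $f$. Every such root has the form $\zeta\,\rho^{n/2^{d+1}}$, where $\rho$ is the given $n$-th root of $f$ and $\zeta$ is a $2^{d+1}$-th root of unity, which is available because $2^{d+1} \le n$ and an $n$-th root of unity is on hand. Inverses of the twists come from $f^{-1}$ together with powers of $\rho$ and $\zeta$. Since fewer than $2n$ distinct twists arise in total, all of them, their inverses, and $2^{-1}$ can be tabulated in $O(n)$ preprocessing. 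Once this bookkeeping is in place, the divide-and-conquer delivers the claimed bound.
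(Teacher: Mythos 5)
Your proof is correct in substance and produces exactly the same algorithm as the paper, but it justifies it by a different key lemma. The paper works entirely at the matrix level: it forms $A_{M_1}=A_1+A_2\sqrt{f}$ and $A_{M_2}=A_1-A_2\sqrt{f}$, recombines with divisions by $2$ and $\sqrt{f}$, and then spends the bulk of the proof verifying entry-by-entry that $A_{M_1}$ and $A_{M_2}$ again satisfy properties \eqref{(1)} and \eqref{(2)} of Definition~\ref{Dzyklisch} with parameters $\sqrt{f}$ and $-\sqrt{f}$, so that Algorithm~\ref{Dalgorithm} can recurse. Your route through $R[x]/(x^n-f)\cong R[x]/(x^{n/2}-g)\times R[x]/(x^{n/2}+g)$ obtains that structure-preservation statement for free from the factorization $x^n-f=(x^{n/2}-g)(x^{n/2}+g)$, and your explicit CRT reconstruction (using $2^{-1}$ and $g^{-1}$) is precisely the paper's output step $c_1=(M_1+M_2)/(2\sqrt{f})$, $c_2=(M_1-M_2)/2$ in disguise. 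What your version buys is conceptual transparency and a cleaner induction; what the paper's version buys is that it never has to set up the polynomial-ring correspondence (including the index reversal you correctly flag) and stays in the language of Definition~\ref{Dzyklisch}. You are also more careful than the paper on the one genuinely delicate point, namely that the hypotheses must persist down the recursion: the paper never says where the $n$-th root of unity is used, whereas you identify that the depth-$d$ twists are $\zeta\rho^{n/2^{d}}$ and that all of them can be tabulated in $O(n)$ time. Do note, however, that both your argument and the paper's implicitly need more than the literal hypothesis ``an $n$-th root of unity'': to take a square root of the twist $-\rho^{n/2}$ one needs an element $\zeta$ with $\zeta^2=-1$, and an arbitrary $\omega$ with $\omega^n=1$ (for instance $\omega=1$) does not supply one; the intended hypothesis is a principal root of unity satisfying $\omega^{n/2}=-1$. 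This is a defect of the theorem statement shared by both proofs, not a gap specific to yours.
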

\begin{proof}
Let
\begin{displaymath}
A = \begin{bmatrix}
A_1 & A_2\\
A_2f & A_1
\end{bmatrix}
\end{displaymath}
and let $b = (b_1, b_2)^T$ and let $Ab = (c_1, c_2)^T$. We consider the algorithm for computing the product $Ab$:
\begin{algorithm}\label{Dalgorithm}
Input: $f$-circulant matrix $A$, vector $b$
\begin{align*}
M_1&:=(A_1 + A_2\sqrt{f})(b_1\sqrt{f} + b_2)\\
M_2&:=(A_1 - A_2\sqrt{f})(b_1\sqrt{f} - b_2)
\end{align*}
Output:
\begin{align*}
c_1&=\frac{M_1 + M_2}{2\sqrt{f}}\\
c_2&=\frac{M_1 + M_2}{2} - M_2
\end{align*}
\end{algorithm}

Now, let us show that recursive applications preserve the structure of the matrix. Let $n \geq 4$ be a power of two, $A$ be an $f$-circulant $n \times n$ matrix and $b$ be an $n \times 1$ vector. We apply algorithm \ref{Dalgorithm} to compute the product of $A$ and $b$. With $a_{ij}$ we denote the entries of $A$. Let $A_{M_1} = A_1 + A_2\sqrt{f}$ and let $A_{M_2} = A_1 - A_2\sqrt{f}$. For both matrices \eqref{(1)} of definition \ref{Dzyklisch} clearly holds . Without loss of generality let $i=1$. Let $j \in \{2,\dots,n/2\}$. Since the diagonal is irrelevant for preserving the structure we do not need to consider $j=1$. First we consider the matrix $A_{M_1}$. This matrix consists of the addition $a_{1j} + a_{1(j+n/2)}\sqrt{f}$. Because of \eqref{(1)} of definition \ref{Dzyklisch} we have:
\begin{align*}
a_{1j} + a_{1(j+n/2)}\sqrt{f}&= a_{2(j+1)} + a_{2(j+1+n/2)}\sqrt{f}\\
&= a_{3(j+2)} + a_{3(j+2+n/2)}\sqrt{f}\\
&\vdots\\
&= a_{(1-j+n/2)(n/2)} + a_{(1-j+n/2)n}\sqrt{f}\\
&= a_{(2-j+n/2)(1+n/2)} + a_{(1-j+n/2)n}\sqrt{f}
\end{align*}
We multiply both sides with $\sqrt{f}$ and obtain:
\begin{displaymath}
a_{1j}\sqrt{f} + a_{1(j+n/2)}f = a_{(2-j+n/2)(1+n/2)}\sqrt{f} + a_{(1-j+n/2)n}f
\end{displaymath}
By \eqref{(2)} of definition \ref{Dzyklisch} we get $a_{(1-j+n/2)n}f = a_{(2-j+n/2)1}$ and it follows:
\begin{displaymath}
a_{1j}\sqrt{f} + a_{1(j+n/2)}f = a_{(2-j+n/2)(1+n/2)}\sqrt{f} + a_{(2-j+n/2)1}
\end{displaymath}
Furthermore by \eqref{(1)} of definition \ref{Dzyklisch} we have $a_{1j}\sqrt{f} + a_{1(j+n/2)}f = a_{(1-j+n/2)(n/2)}\sqrt{f} + a_{(1-j+n/2)n}f$. By that we obtain:
\begin{displaymath}
a_{(1-j+n/2)(n/2)}\sqrt{f} + a_{(1-j+n/2)n}f = a_{(2-j+n/2)(1+n/2)}\sqrt{f} + a_{(2-j+n/2)1}
\end{displaymath}
Let $f' = \sqrt{f}$. Finally we get:
\begin{displaymath}
(a_{(1-j+n/2)(n/2)} + a_{(1-j+n/2)n}\sqrt{f})f' = a_{(2-j+n/2)(1+n/2)}\sqrt{f} + a_{(2-j+n/2)1}
\end{displaymath}
Hence, \eqref{(2)} of definition \ref{Dzyklisch} is preserved by the matrix $A_{M_1}$. Moreover $A_{M_1}$ can be seen as $f'$-circulant matrix and we are able to apply algorithm \ref{Dalgorithm} recursively. Let $f'' = -\sqrt{f}$. Analogously we can show that $A_{M_2}$ is an $f''$-circulant matrix.
\end{proof}
\begin{corollary}
Let $A$ be a circulant $n \times n$ matrix and let $b$ be an $n \times 1$ vector. Then the product $Ab$ can be computed using $O(n \log(n))$ operations without making use of FFT.
\end{corollary}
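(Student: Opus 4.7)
The plan is to invoke the preceding theorem with $f = 1$, since a circulant matrix is by definition a $1$-circulant matrix. Under this specialization the hypotheses simplify considerably: an $n$-th root of $f = 1$ is just $1$, and $f^{-1} = 1$, so the only substantive requirements are that the ring contain an $n$-th root of unity and the element $2^{-1}$. Both are present in standard settings (e.g.\ $\mathbb{C}$, or a finite field where one works) and, where absent, can be supplied by a routine ring extension. The theorem then gives the desired $O(n \log n)$ bound.

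What still needs attention is verifying that the procedure is genuinely FFT-free. I would trace one level of Algorithm \ref{Dalgorithm}: on an $f$-circulant input of size $n$ it performs $O(n)$ ring operations (additions, subtractions, scalar multiplications by $\sqrt{f}$ and by $2^{-1}$) and makes two recursive calls of size $n/2$, one on a $\sqrt{f}$-circulant matrix and one on a $(-\sqrt{f})$-circulant matrix. Starting from $f = 1$, the scalars $f$ appearing at depth $k$ are $2^k$-th roots of unity, so they all lie in the ring by hypothesis. No discrete Fourier transform is invoked at any level; only the structural split of Algorithm \ref{Dalgorithm} is used.

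The running-time estimate then follows from the recurrence $T(n) = 2 T(n/2) + O(n)$, which resolves to $T(n) = O(n \log n)$ by the Master Theorem.

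The only conceptual obstacle is the bookkeeping of the ring extension: one must confirm that every square root produced by unfolding the recursion — $\pm 1, \pm i, \pm \sqrt[4]{1}, \dots$ — already lives among the $n$-th roots of unity, which is immediate because $n$ is a power of two. With that in place the corollary is a direct specialization of the theorem together with the observation that the algorithm in its proof never calls an FFT routine.
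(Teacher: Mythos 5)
There is a genuine gap: you have silently assumed that $n$ is a power of two. The corollary places no restriction on $n$, while the theorem you invoke explicitly requires $n$ to be a power of two (the recursion halves the matrix at every level, and the root-of-unity hypothesis is tied to that). Your closing remark that the bookkeeping of square roots ``is immediate because $n$ is a power of two'' presupposes exactly the hypothesis that is missing. Everything else you write --- specializing to $f=1$, checking that only ring operations and no DFT are used, and solving $T(n)=2T(n/2)+O(n)$ --- is consistent with the paper, but it only covers the easy half of the corollary.

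The paper's proof spends essentially all of its effort on the remaining case. When $n$ is not a power of two, it embeds the problem into a larger circulant system of power-of-two size: choose the smallest $d$ with $2^d>n$, set $N=2^{d+1}$, identify $A$ with its row vector $a=(a_1,\dots,a_n)$, and form $a'=(a_1,\dots,a_n,0,\dots,0,a_2,\dots,a_n)$ and $b'=(b_1,\dots,b_n,0,\dots,0)^T$ of length $N$. The circulant matrix $A'$ built from $a'$ satisfies that the first $n$ entries of $A'b'$ equal $Ab$; since $N=O(n)$ is a power of two, the theorem applies to $A'b'$ and the $O(n\log n)$ bound carries over. To complete your argument you need this padding construction (or an equivalent reduction) and a verification that the wrap-around entries of the $N$-point cyclic convolution reproduce those of the $n$-point one, which is exactly what the repeated block $a_2,\dots,a_n$ at the end of $a'$ and the zero block in $b'$ guarantee.
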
\label{Dcorollary}
\begin{proof}
We consider $A$ as $1$-circulant matrix and use algorithm \ref{Dalgorithm} to compute the product $Ab$. If $n$ is a power of two all is clear. So, let us suppose $n$ to be not a power of two. Let $d$ be the smallest integer such that $2^d > n$ and let $N = 2^{d+1}$. We identify $A$ by the row vector $a = (a_1,\dots,a_n)$. Let $a' = (a_1,\dots,a_n,0,\dots,0,a_2,\dots,a_n)$ and $b'=(b_1,\dots,b_n,0,\dots,0)^T$ be vectors of length $N$. Let $A'$ be the circulant matrix identified by $a'$. Then the product $Ab$ clearly is computed by $A'b'$ if $b=(b_1,\dots,b_n)^T$.
\end{proof}

\section{Some Applications}
\subsection{Multiplication of Polynomials}\label{MultiplicationofPolynomials}
We implemented Algorithm \ref{Dalgorithm} to multiply two polynomials with integer coefficients. Let $a(x)$ and $b(x)$ be two polynomials with degree $d_a$ respectively $d_b$. Let $N = (d_a + 1)(d_b + 1) - 1$. If $N$ is not a power of two we apply Corollary \ref{Dcorollary} to fill up the input. So let us suppose $N$ to be a power of two. The multiplication of two polynomials forms a circulant matrix-vector product in a natural way. In the following we give a description of our settings. The implementation was done in java. We used $64$ bit to deal with multiplications of $32$ bit numbers. To avoid the use of complex numbers we decided to compute modulo a Mersenne prime. There is a suitable Mersenne prime $p:=2^{31}-1 = 2147483647$ with $31$ bit. For our algorithm the input size needs to be a power of two and also the $N$-th root of unity needs to be a power of two. Since $p+1$ is a power of two we can compute in $\mathbb{Z}/p\mathbb{Z}[\sqrt{3}]$ where we can find the generator $2 + \sqrt{3}$ \cite{forster}. Hence, we can find with $(2 + \sqrt{3})^{\frac{p+1}{N}}$ our $N$-th root of unity. For comparison we also implemented the well known FFT algorithm by Cooley and Tukey \cite{FFT}. Note that the computation in $\mathbb{Z}/p\mathbb{Z}[\sqrt{3}]$ does not affect the comparison since we implemented both algorithms in this field. Also note that the roots needed by Algorithm \ref{Dalgorithm} can be precomputed in linear time. The inverse roots do not need to be computed separately since they are already computed. For instance let $\sqrt{-1} = i$ then we have $(\sqrt[4]{i})^{-1} = -i\sqrt{i}\sqrt[4]{i}$ in which both numbers are precomputed roots. Let us compare the running times of the algorithms\footnote{The implementation of both algorithms can be found at https://github.com/hans152/polynomials}. In our comparison both polynomials have the same size (number of coefficients) which we call $n$. Since we have chosen $n$ to be a power of two $2n-1$ is not a power of two. Thus, we need to fill up the input. The procedure of filling up is the same in both implementations. By classic we mean the implementation using three FFTs and by circulant we mean Algorithm \ref{Dalgorithm}.
\begin{displaymath}
\begin{tabular}{|c|c|c|c|}
\hline
$n$ & classic & circulant & ratio\\
\hline
8 & 270ms & 124ms & 2.18\\
\hline
16 & 593ms & 265ms & 2.24\\
\hline
32 & 1295ms & 577ms & 2.24\\
\hline
64 & 2886ms & 1279ms & 2.26\\
\hline
128 & 6427ms & 2870ms & 2.24\\
\hline
256 & 15772ms & 6490ms & 2.43\\
\hline
512 & 39187ms & 16473ms & 2.38\\
\hline
\end{tabular}
\end{displaymath}
For every $n$ we multiplied $10000$ polynomials and ran both algorithms several times and always noted the smallest measured time. Note that our implementation can easily be adapted to multiply also polynomials with say real coefficients. Of course complex numbers are needed in this case.

\subsection{Multiplication of Large Integers}
In this section we discuss the multiplication of large integers as further application of algorithm \ref{Dalgorithm}. For this we briefly give an overview of the Sch\"onhage-Strassen algorithm \cite{schoenhage_strassen} first. We will not go into deeper details of the algorithm.

Suppose we get as input two arrays of length $n_1$ respectively $n_2$ of integers and let $n = n_1 + n_2 - 1$. Now split the arrays in roughly $\sqrt{n}$ parts such that the number of parts is a power of two. We call the number of parts $N$. Then we can see the parts as integer coefficients of two polynomials of length $N$. Choose a suitable residue class ring such that the module is a power of two plus one. Then we can find a $2N$-th root of unity which is also a power of two. This allows to compute operations with roots of unity by shifting. Finally use FFT to evaluate these two polynomials und compute the component-wise multiplications recursively. Use a third FFT for interpolation.

By algorithm \ref{Dalgorithm} we can avoid the usage of three FFTs and compute the product of the two polynomials directly. In contrast to section \ref{MultiplicationofPolynomials} the roots and inverse roots of algorithm \ref{Dalgorithm} do not need to be computed since the roots of unity are powers of two. Thus, we only need to know the number of bits that are to shift.

\section{Conclusion}
We have developed an algorithm to compute the product of a circulant matrix and a vector in time $O(n \log(n))$ without using an FFT algorithm. Furthermore we discussed the multiplication of two polynomials and the multiplication of large numbers as application of our algorithm. We implemented the algorithm for the multiplication of two polynomials and obtained indeed a speed up of a constant factor. This fact makes it plausible that our algorithm can also speed up the Sch\"onhage-Strassen algorithm \cite{schoenhage_strassen}. This would be of high practical interest. For instance for the search of Mersenne primes. The largest known prime number currently has around $25$ million decimal digits and the computations becoming increasingly expensive.

\bibliographystyle{abbrv}

\end{document}